\documentclass[12pt,reqno]{article}

\usepackage[usenames]{color}
\usepackage{amssymb}
\usepackage{amsmath}
\usepackage{amsthm}
\usepackage{amsfonts}
\usepackage{amscd}
\usepackage{graphicx}

\usepackage[colorlinks=true,
linkcolor=webgreen,
filecolor=webbrown,
citecolor=webgreen]{hyperref}

\definecolor{webgreen}{rgb}{0,.5,0}
\definecolor{webbrown}{rgb}{.6,0,0}

\usepackage{color}
\usepackage{fullpage}
\usepackage{float}

\usepackage{graphics}
\usepackage{latexsym}
\usepackage{epsf}
\usepackage{breakurl}

\newcommand{\seqnum}[1]{\href{https://oeis.org/#1}{\rm \underline{#1}}}

\def\modd#1 #2{#1\ \mbox{\rm (mod}\ #2\mbox{\rm )}}

\begin{document}

\title{The speed of convergence in greedy Galois games}

\author{Jeffrey Shallit\\
School of Computer Science\\
University of Waterloo\\
Waterloo, ON  N2L 3G1 \\
Canada\\
\href{mailto:shallit@uwaterloo.ca}{\tt shallit@uwaterloo.ca}}

\maketitle

\theoremstyle{plain}
\newtheorem{theorem}{Theorem}
\newtheorem{corollary}[theorem]{Corollary}
\newtheorem{lemma}[theorem]{Lemma}
\newtheorem{proposition}[theorem]{Proposition}

\theoremstyle{definition}
\newtheorem{definition}[theorem]{Definition}
\newtheorem{example}[theorem]{Example}
\newtheorem{conjecture}[theorem]{Conjecture}

\theoremstyle{remark}
\newtheorem{remark}[theorem]{Remark}

\def\Enn{\mathbb{N}}

\begin{abstract}
In 2013 Cooper and Dutle invented a dueling scenario where 
Alice and Bob shoot at each other until one is hit.  Each shot is
successful with some fixed probability $p$, $0 < p < 1$.
The shooting order is given by a greedy algorithm, where at
each step a shot is assigned to the player whose current probability
of success is smaller.

Cooper and Dutle observed that as $p \rightarrow 0$, the
resulting sequence of shots (by Alice or Bob)
converges to the infinite Thue-Morse
sequence $\bf t$, but left the speed of convergence as an
open problem.
In this note we determine the speed of this convergence.
\end{abstract}

\section{Introduction}

The {\it Thue-Morse\/} sequence ${\bf t} = (t_i)_{i \geq 0}$ is a
celebrated infinite
sequence of $+1$ and $-1$ defined by $t_i = (-1)^{s_2 (i)}$, where
$s_2 (i)$ is the sum of the bits of the binary representation of $i$.
It often makes its appearance in surprising ways
and in unexpected places; see the survey
\cite{Allouche&Shallit:1999} for some examples.\footnote{In
the literature, the Thue-Morse
sequence is often defined with terms $\{0,1 \}$ or $\{a,b\}$
instead of $\{1,-1\}$ as we do here.}
The first few terms are given in Table~\ref{tab1}.
\begin{table}[H]
\begin{center}
\begin{tabular}{c|cccccccccccccccccccccccccccccccc}
$n$ & 0& 1& 2& 3& 4& 5& 6& 7& 8& 9&10&11&12&13&14&15&16&17&18 \\
\hline
$t_n$ &  1&$-1$&$-1$& 1&$-1$& 1& 1&$-1$&$-1$& 1& 1&$-1$& 1&$-1$&$-1$& 1 
&$-1$ & 1 & 1 \\
\end{tabular}
\end{center}
\caption{First few values of the Thue-Morse sequence.}
\label{tab1}
\end{table}
\noindent From the definition, we immediately see that $t_{2i} = t_i$ and
$t_{2i+1} = -t_i$.

In 2013, Joshua Cooper and Aaron Dutle found yet another place where
the Thue-Morse sequence makes a natural appearance.
They considered the following scenario, which they called a {\it greedy Galois game} \cite{Cooper&Dutle:2013}.
Alice and Bob participate in a duel;
the game ends when one person---the winner---hits the other.
When a shot is made by either player, it causes a hit
with fixed probability $p$, for some $p$ with  $0 < p < 1$.

Before the game begins, they agree to take shots in a fixed
order, based on a greedy fairness rule:  Alice shoots first, and then
each player shoots as many times as needed to obtain
a winning probability that exceeds the probability
that the other player has won thus far.  Thus the 
probabilistic advantage moves
in waves alternately between Alice and Bob.

For example, if $p = 1/2$, then
Alice shoots first.  If she hits Bob, the game is over.  Otherwise, Bob
gets to shoot forever until he hits Alice.  Both players succeed with
probability exactly $1/2$.
If $p = 1/3$, the sequence begins
Alice, Bob, Bob, Alice, Bob, Alice, Bob, $\ldots$.

If we encode Alice's shot as $+1$ and Bob's as $-1$, then
Cooper and Dutle \cite[Eq.~(1), p.~442]{Cooper&Dutle:2013} proved
the following.
\begin{lemma}
For $b_0, b_1, \ldots, b_{N-1}$ an arbitrary finite sequence of $\pm 1$,
the difference between Alice and Bob's probabilities of having
won by time $N$ is
$ p \sum_{0 \leq j < N} b_j q^j$.  If this quantity is positive,
the greedy rule gives the next shot to Bob, and if it is negative,
to Alice.
\label{lem0}
\end{lemma}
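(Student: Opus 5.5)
The plan is to compute each player's probability of having won by time $N$ directly from the shot sequence, and then subtract. Write $q = 1-p$. The game is still in progress just before shot $j$ (indexed from $0$) exactly when all $j$ earlier shots missed. Since shots are independent and each misses with probability $q$, this event has probability $q^j$. Shot $j$ then ends the game with a win for its shooter with probability $pq^j$.

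The events ``the game ends at shot $j$'' for $j=0,1,\ldots,N-1$ are disjoint. Hence Alice's probability of having won by time $N$ is
\[
P_A(N)=\sum_{\substack{0\le j<N\\ b_j=+1}} p q^j ,
\]
and Bob's probability $P_B(N)$ is the same sum taken over $j$ with $b_j=-1$. Subtracting term by term gives
\[
P_A(N)-P_B(N)=\sum_{0\le j<N} b_j\, p q^j = p\sum_{0\le j<N} b_j q^j ,
\]
because $b_j=\pm1$ records exactly which of the two sums contains the term $pq^j$. This proves the first assertion.

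For the second assertion, I would read the greedy rule as the paper states it: at each step the next shot goes to the player whose current probability of having won is smaller. If $p\sum_{j<N} b_j q^j>0$, then $P_A(N)>P_B(N)$, so Bob is behind and receives shot $N$. If the sum is negative, Alice is behind and receives it. Equivalently, a player keeps shooting until their cumulative probability exceeds the other's, at which point the sign of the difference flips and the turn passes.

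There is no real obstacle in this argument. The only care needed is to make the informal rule precise: ``shoots as many times as needed to exceed'' must mean that the player behind shoots next. One should also note that equality (a zero difference) is the degenerate case the lemma leaves unaddressed, as in the $p=1/2$ example.
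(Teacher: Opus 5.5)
Your argument is correct. The paper gives no proof of its own here and simply cites Cooper and Dutle's Eq.~(1). Your computation is the standard derivation behind that formula: shot $j$ ends the game in the shooter's favor with probability $pq^j$, these events are disjoint, and summing them with signs $b_j$ gives the difference. The second assertion then follows once the greedy rule is read as ``the player currently behind shoots next.''

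One small correction to your closing aside: the $p=1/2$ example never produces an exact tie at any finite time. There the difference after $N\ge 1$ shots is $p\,2^{1-N}>0$, so Bob shoots forever and equality holds only in the limit. Exact ties occur instead at special algebraic values such as $q=\alpha$, $N=3$ (where $1-\alpha-\alpha^2=0$). This is why the paper's statements use open intervals that exclude endpoints like $\alpha$ and $\beta$.
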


Set $q = 1-p$.  As Cooper and Dutle observed,
it turns out that it is simpler to express 
everything as a function of $q$, rather than $p$, and we will follow
their convention.
Let $(a_q (i))_{i \geq 0}$ be the $\pm 1$-sequence of greedy choices
corresponding to the particular value of $q = 1-p$.
Cooper and Dutle proved that 
as $q \rightarrow 1$, the
sequence $(a_q(i))_{i \geq 0}$ agrees with the Thue-Morse sequence
${\bf t} = (t_i)_{i \geq 0}$ on longer and longer prefixes.   

For example, for $q = 2/3$ the sequence $(a_q(i))_{i \geq 0}$
begins $1, -1, -1, 1, -1, 1, -1,\ldots$,
which agrees with $(t_i)_{i \geq 0}$ on the first $6$ terms, but
then differs at the $7$th term.
This suggests the following natural question:  as a function of $q$,
how long does the shooting sequence $(a_q(i))_{i \geq 0}$  agree with $\bf t$?
In this note we answer this question.

In retrospect,  it is not that surprising that the Thue-Morse
sequence might make an appearance in a scenario like this,
because it also appears in other situations involving fair
allocation (see \cite{Barrow:2010,Brams&Ismail:2018,Palacios:2012} and
\cite[Chap.~3]{Brams&Taylor:1999})
and greedy choice \cite{Allouche&Cohen:1985}.

\section{The main result}

Let $L_q$ be the length of the longest prefix where
$(a_q(i))_{i \geq 0}$ agrees with $\bf t$.
A little numerical experimentation, as illustrated in
Figure~\ref{fig1}, suggests that $L_q$ always belongs
to the integer sequence
\begin{equation}
3, 5,6,10,12, 20,24,40,48,80,96,\ldots,
\label{values}
\end{equation}
which are the numbers of the form $3 \cdot 2^i$ and
$5 \cdot 2^i$ for $i \geq 0$.  This is one way to state
our main result.\footnote{This sequence, missing its first
term $3$, is sequence \seqnum{A164095} in the On-Line
Encyclopedia of Integer Sequences \cite{oeis}.}
\begin{figure}[htb]
\begin{center}
\includegraphics[width=6.2in]{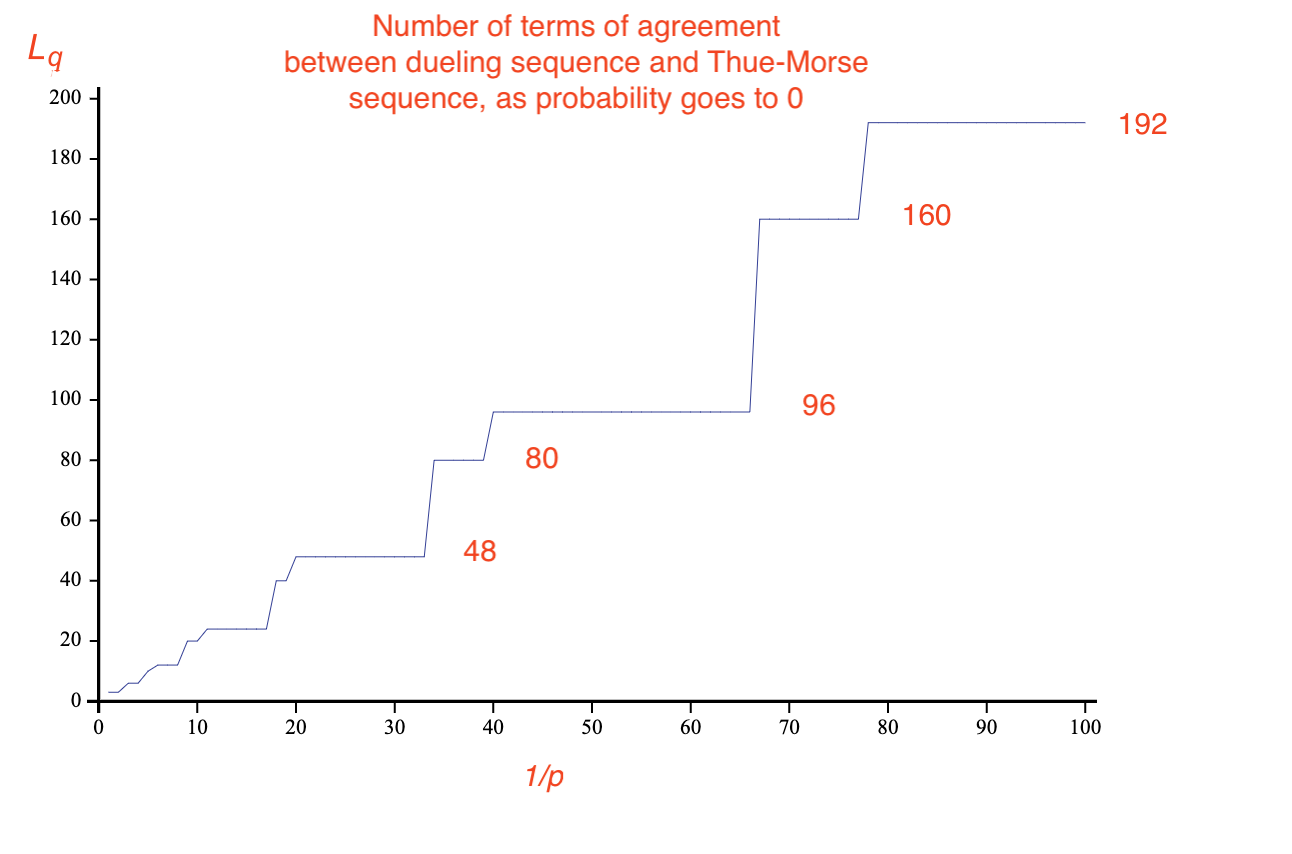}
\end{center}
\caption{$L_q$ on the $y$-axis, $1/p = 1/(1-q)$ on the $x$-axis.}
\label{fig1}
\end{figure}

The proof is carried out in a series of lemmas.

\begin{lemma}
Define $S_n(q) = \sum_{0 \leq j <n} t_j q^j$.
Let $m \geq 1$, $r \geq 0$ be integers.  Then
$S_{m \cdot 2^r} (q) = S_{2^r} (q) S_m (q^{2^r}) $.
\label{switch}
\end{lemma}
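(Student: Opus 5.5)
The plan is to split the sum defining $S_{m\cdot 2^r}(q)$ into $m$ consecutive blocks of length $2^r$. The key ingredient is the identity
\[
t_{k\cdot 2^r + i} = t_k\, t_i \qquad (k \geq 0,\ 0 \leq i < 2^r),
\]
which I would establish first. It follows directly from the definition $t_n = (-1)^{s_2(n)}$. When $0 \leq i < 2^r$, the binary representation of $k\cdot 2^r + i$ is the binary representation of $k$ followed by the $r$-bit representation of $i$ (padded with leading zeros). No carries occur, so $s_2(k\cdot 2^r+i) = s_2(k)+s_2(i)$. Alternatively, one can argue by induction on $r$ from the relations $t_{2i}=t_i$ and $t_{2i+1}=-t_i$ noted in the introduction. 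The binary-digit argument is cleaner, so I would use it.

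Next, I would write every index $j$ with $0 \leq j < m\cdot 2^r$ uniquely as $j = k\cdot 2^r + i$ with $0 \leq k < m$ and $0 \leq i < 2^r$. This gives
\[
S_{m\cdot 2^r}(q) = \sum_{0\leq k<m}\ \sum_{0 \leq i < 2^r} t_{k 2^r + i}\, q^{k 2^r + i}
= \sum_{0\leq k<m} t_k \bigl(q^{2^r}\bigr)^k \sum_{0\leq i<2^r} t_i q^i .
\]
The inner sum is $S_{2^r}(q)$, which does not depend on $k$, and it factors out. The remaining outer sum is exactly $S_m(q^{2^r})$, which completes the identity.

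The only step requiring any care is the digit-concatenation identity $t_{k 2^r+i}=t_kt_i$, specifically the verification that no carries occur because $i<2^r$. Everything after that is a routine regrouping of a finite sum, so I expect no real obstacle.
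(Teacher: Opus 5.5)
Your proof is correct and follows the paper's argument: both write each index as $k\cdot 2^r + i$ with $0 \leq k < m$ and $0 \leq i < 2^r$, use $t_{k2^r+i}=t_k\,t_i$, and factor the double sum into $S_{2^r}(q)\,S_m(q^{2^r})$. The only difference is that you justify the digit identity with the no-carry concatenation argument, whereas the paper simply calls it clear.
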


\begin{proof}
An easy induction shows that
$S_{2^r} (q) = \prod_{0 \leq j < r} (1-q^{2^j})> 0$
for $r \geq 0$.\\[-.1in]

For $0 \leq n < m \cdot 2^r$, write $n$ uniquely
as $n = a\cdot 2^r + b$, where $0 \leq a < m$ and $0 \leq b < 2^r$.
Clearly $t_{a\cdot 2^r + b} = t_a\, t_b$, and so
\begin{align*}
S_{m2^r}(q) 
&= \sum_{0 \leq i < m \cdot 2^r} t_i q^i \\
&=\sum_{0 \leq a < m}\sum_{0 \leq b < 2^r} t_a\, t_b\, q^{a2^r+b}  \\[.1in]
&=\left(\, \sum_{0 \leq b < 2^r} t_b q^b\right)
  \left(\, \sum_{0 \leq a < m} t_a (q^{2^r})^a\right) \\[.1in]
&= S_{2^r}(q)\, S_m(q^{2^r}).
\end{align*}
\end{proof}

\begin{lemma}[The sign test]
If $a_q (i) = t_i$ for $0 \leq i < N$, then $a_q (N) = t_N$ if and only
if $t_N S_q (N) < 0$.
\label{sign}
\end{lemma}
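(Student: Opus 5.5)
The statement should follow directly from Lemma~\ref{lem0}; I read $S_q(N)$ as $S_N(q)=\sum_{0\le j<N} t_j q^j$ from Lemma~\ref{switch}. The plan is to use the hypothesis $a_q(i)=t_i$ for $0\le i<N$ to identify the greedy state after $N$ shots with a Thue--Morse partial sum, and then translate the greedy rule into a sign condition.

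First, apply Lemma~\ref{lem0} to the finite sequence $b_j=a_q(j)$ for $0\le j<N$. By hypothesis $b_j=t_j$, so the difference between Alice's and Bob's probabilities of having won by time $N$ is
\[
p\sum_{0\le j<N} t_j q^j = p\,S_N(q).
\]
Since $0<p<1$, this difference has the same sign as $S_N(q)$.

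Next, read off the greedy rule from Lemma~\ref{lem0}:
\begin{itemize}
\item if $S_N(q)>0$, the next shot goes to Bob, so $a_q(N)=-1$;
\item if $S_N(q)<0$, it goes to Alice, so $a_q(N)=+1$.
\end{itemize}
In both cases $a_q(N)=-\operatorname{sgn} S_N(q)$. Then compare with $t_N\in\{\pm1\}$: $a_q(N)=t_N$ holds exactly when $t_N=-\operatorname{sgn} S_N(q)$, which is equivalent to $t_N S_N(q)<0$.

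The only delicate point, and the step I expect to need care, is the degenerate case $S_N(q)=0$. This can genuinely happen: for instance $S_3(q)=1-q-q^2$ vanishes at $q=(\sqrt5-1)/2$. Here Lemma~\ref{lem0} does not specify who shoots next, and $t_N S_N(q)=0$ is not negative. So the ``only if'' direction needs the tie-breaking convention to assign shot $N$ against $t_N$ (or the game to be regarded as already fair, so that no agreeing shot $N$ occurs). I would state this convention explicitly, matching the description in the introduction where a player shoots until their probability \emph{exceeds} the other's. Under that convention a tie does not yield agreement, and the equivalence holds for all $0<q<1$. Alternatively, one could restrict attention to the $q$ for which $S_N(q)\ne0$, which excludes only finitely many values for each $N$.
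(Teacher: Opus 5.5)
Your main argument is the paper's proof: set $b_j=t_j$ in Lemma~\ref{lem0}, note that $p\,S_N(q)$ has the sign of $S_N(q)$, and read off that the next shot is Bob's when $S_N(q)>0$ and Alice's when $S_N(q)<0$. The paper says nothing about $S_N(q)=0$, so you are right to flag ties. However, your proposed convention does not repair them, and your claim that under it ``the equivalence holds for all $0<q<1$'' is false.

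Under the introduction's rule, a player keeps shooting until her probability strictly exceeds the other's. So a tie at time $N$ means the shooter of shot $N-1$ continues, and $a_q(N)=t_{N-1}$. This agrees with $t_N$ whenever $t_{N-1}=t_N$. Concretely, let $q=\sqrt{\alpha}$, so that $q^2=\alpha$ and $q^4=1-\alpha$. Then:
\begin{itemize}
\item $S_1,S_2,S_4>0$, $S_3(q)=q^4-q<0$ and $S_5(q)=-q(1-\alpha)=-q^5<0$, so $a_q(i)=t_i$ for $0\le i<6$;
\item but $S_6(q)=(1-q)(1-q^2-q^4)=0$.
\end{itemize}
Alice fired shot $5$ and has only drawn level, so she fires shot $6$ as well. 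This gives $a_q(6)=+1=t_6$ even though $t_6S_6(q)=0$, so the ``only if'' direction fails under that convention. The opposite convention, handing the shot to the other player at a tie, fails instead at $q=\alpha$, $N=3$.

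The sound repair is your second alternative: assume $S_N(q)\neq0$. Equivalently, prove only the two implications $t_NS_N(q)<0\Rightarrow a_q(N)=t_N$ and $t_NS_N(q)>0\Rightarrow a_q(N)\neq t_N$. This costs nothing later, since Proposition~\ref{prop5} and Theorem~\ref{main} work on open intervals where the relevant inequalities are strict.
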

We call $t_N S_q(N) < 0$ the {\it sign test} for the pair $(q,N)$.

\begin{proof}
By Lemma~\ref{lem0},
the sign of $p \sum_{0 \leq i < N} a_q(i) q^i$
is the same as the sign of $\sum_{0 \leq i < N} a_q(i) q^i =
\sum_{0 \leq i < N} t_i q^i = S_N (q)$.
So $t_N$ is the next shot if and only if $t_N S_N (q) < 0$.
\end{proof}

\begin{lemma}
Let $N = m \cdot 2^r$, where $m$ is odd.  Suppose
$a_q (i) = t_i$ for $0 \leq i < N$.
Then the sign test for the pair
$(q,N)$ has the same outcome as the sign test for $(q^{2^r}, m)$.
\label{switch2}
\end{lemma}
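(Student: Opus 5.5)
The plan is to compare the two sign tests directly, using the factorization of Lemma~\ref{switch} and the Thue--Morse identity $t_{2i}=t_i$. Recall that the sign test for $(q,N)$ is the inequality $t_N S_N(q) < 0$. (Lemma~\ref{sign} writes $S_q(N)$; this is evidently $S_N(q)$, as its proof shows.) The sign test for $(q^{2^r}, m)$ is the inequality $t_m S_m(q^{2^r}) < 0$. It therefore suffices to show that $t_N S_N(q)$ and $t_m S_m(q^{2^r})$ have the same sign, or differ by a positive factor.

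The first step applies Lemma~\ref{switch} with $N = m\cdot 2^r$:
\[
S_N(q) = S_{2^r}(q)\, S_m(q^{2^r}).
\]
The second step uses the observation in the proof of Lemma~\ref{switch} that
\[
S_{2^r}(q) = \prod_{0 \le j < r} (1 - q^{2^j}).
\]
Since $0<q<1$, each factor lies in $(0,1)$, and the empty product for $r=0$ equals $1$, so $S_{2^r}(q) > 0$. The third step iterates $t_{2i} = t_i$ a total of $r$ times to get $t_N = t_{m 2^r} = t_m$. Combining the three steps gives
\[
t_N S_N(q) = S_{2^r}(q)\cdot t_m S_m(q^{2^r}),
\]
with a strictly positive multiplier. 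Hence one quantity is negative exactly when the other is, and the two sign tests have the same outcome.

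The hypothesis $a_q(i)=t_i$ for $i<N$ is needed only so that Lemma~\ref{sign} applies and the sign test governs the $N$th shot. The algebraic equivalence itself does not depend on it. One point to be careful about is that $q^{2^r}$ also lies in $(0,1)$, so the sign test for $(q^{2^r},m)$ is meaningful in the same sense. Oddness of $m$ is not actually used beyond fixing the decomposition; it matters later, when the reduction lands on an odd length. I do not expect a real obstacle here. The only point needing care is the positivity of $S_{2^r}(q)$, which the product formula handles.
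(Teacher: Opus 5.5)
Your proof is correct and follows the paper's argument: you factor $S_N(q)=S_{2^r}(q)\,S_m(q^{2^r})$ by Lemma~\ref{switch}, observe that the first factor is positive, and use $t_N=t_m$. You state the positivity more accurately than the paper, whose proof says ``clearly $S_m(q^{2^r})>0$'' where it evidently means $S_{2^r}(q)>0$, and your product formula supplies exactly that.
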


\begin{proof}
By Lemma~\ref{switch} we have $S_N (q) = S_{2^r} (q) S_m(q^{2^r})$,
and clearly $S_m (q^{2^r}) > 0$.   Also $t_N = t_{m \cdot 2^r} = t_m$.
Hence $t_N S_N (q) < 0$ if and only if $t_m S_m (q^{2^r}) < 0$.
\end{proof}

Now we define two algebraic numbers that play a crucial role in the
proof.  
\begin{itemize}
\item
$\alpha = (\sqrt{5}-1)/2 \doteq 0.61803398875$, the positive zero of
$1-X-X^2$, and
\item $\beta \doteq 0.6609925319$, the positive real zero of
$1-X-X^2+X^3-X^4$.
\end{itemize}

Now we compute $L_q$ for $0<q < \sqrt{\alpha}$.
\begin{proposition}
\leavevmode
\begin{itemize}
\item[(a)] If $0 < q < \alpha$, then $L_q = 3$.
\item[(b)] If $\alpha < q < \beta$, then $L_q = 5$.
\item[(c)] If $\beta < q < \sqrt{\alpha}$, then $L_q = 6$.
\end{itemize}
\label{prop5}
\end{proposition}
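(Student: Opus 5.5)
The plan is to walk through the indices $N=1,2,\dots,6$ and apply the sign test (Lemma~\ref{sign}) at each one. At each step we assume agreement on the prefix $0\le i<N$. From Table~\ref{tab1}, $t_0,\dots,t_6 = 1,-1,-1,1,-1,1,1$. Throughout, $0<q<1$, and we record the interval $\alpha<\beta<\sqrt{\alpha}$ as part of the argument.

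\emph{The first three steps.} Trivially $a_q(0)=1=t_0$, since Alice shoots first.
\begin{itemize}
\item For $N=1$: $S_1(q)=1$ and $t_1=-1$, so the test $t_1S_1(q)<0$ always holds.
\item For $N=2$: $S_2(q)=1-q>0$ and $t_2=-1$, so the test again always holds.
\item For $N=3$: $S_3(q)=1-q-q^2$ and $t_3=1$, so the test holds exactly when $1-q-q^2<0$. Since $1-X-X^2$ is decreasing on $(0,1)$ with root $\alpha$, this means $q>\alpha$.
\end{itemize}
Hence for $0<q<\alpha$ the sequence agrees on indices $0,1,2$ and differs at index $3$, so $L_q=3$. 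This proves (a).

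\emph{The steps $N=4$ and $N=5$.} Now take $q>\alpha$.
\begin{itemize}
\item For $N=4=1\cdot 2^2$: Lemma~\ref{switch2} reduces the test to $(q^4,1)$. Since $t_1S_1=-1<0$, it always holds. Directly, $S_4(q)=(1-q)(1-q^2)>0$ and $t_4=-1$.
\item For $N=5$: $t_5=1$ and $S_5(q)=1-q-q^2+q^3-q^4$, so the test holds iff $S_5(q)<0$.
\end{itemize}
To convert the $N=5$ condition into $q>\beta$, I show $S_5$ is strictly decreasing on $(0,1)$. Its derivative is $-1-2q+3q^2-4q^3$, and on $(0,1)$ we have $3q^2<2q+1$ because $q^2<q$ and $q^2<1$. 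So the derivative is negative. Together with $S_5(0)=1$ and $S_5(1)=-1$, this gives a unique root $\beta$ in $(0,1)$, with $S_5(q)<0$ exactly for $q>\beta$.

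For consistency I also check $\beta>\alpha$. Indeed $S_5(\alpha)=S_3(\alpha)+\alpha^3-\alpha^4=\alpha^3(1-\alpha)>0$. Therefore for $\alpha<q<\beta$ the sequence agrees through index $4$ and fails at index $5$, so $L_q=5$. This proves (b).

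\emph{The step $N=6$.} Now take $q>\beta$, so the sequence agrees through index $5$. Write $N=6=3\cdot 2$. By Lemma~\ref{switch2}, the sign test for $(q,6)$ has the same outcome as the test for $(q^2,3)$, which by the $N=3$ computation holds iff $q^2>\alpha$. For $q<\sqrt{\alpha}$ the test fails, giving $L_q=6$.

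It remains to check that the interval in (c) is nonempty, i.e., $\beta<\sqrt{\alpha}$. Put $s=\sqrt\alpha$, so $s^2=\alpha$ and $s^4=\alpha^2=1-\alpha$. Then
\[
S_5(s)=1-s-\alpha+s\alpha-(1-\alpha)=-s(1-\alpha)<0,
\]
and since $S_5$ is decreasing with root $\beta$, this gives $\beta<s$. This proves (c).

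The only step needing real care is $N=5$: showing $S_5$ has a single sign change on $(0,1)$ and locating $\beta$ relative to $\alpha$ and $\sqrt\alpha$. Everything else is direct evaluation plus Lemma~\ref{switch2}. The endpoints $q\in\{\alpha,\beta,\sqrt\alpha\}$ are excluded by the hypotheses, since at those values a partial sum vanishes and the greedy rule has a tie.
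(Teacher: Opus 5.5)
Your proposal is correct and takes essentially the same route as the paper: you apply the sign test at $N=1,\dots,6$, using $S_4(q)=(1-q)(1-q^2)>0$ and the factorization $S_6(q)=S_2(q)S_3(q^2)$ for the final step. You also prove that $S_5$ is strictly decreasing on $(0,1)$ and that $\alpha<\beta<\sqrt{\alpha}$; the paper uses these facts without proof when it reads off the sign of $S_5(q)$ from where $q$ lies relative to $\beta$.
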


\begin{proof}
\leavevmode
\begin{itemize}
\item[(a)] It is easy to check that 
$S_1 (q) = q^0 = 1>0$, while $t_1 = -1$,
$S_2(q) = 1-q = p> 0$, while $t_2 = -1$, 
and $S_3 (q) = 1-q-q^2 > 0$, while $t_3 = 1$.
The sign test then shows $L_q = 3$.

\item[(b)] It is easy to check that
$S_1 (q) = 1>0$, $S_2(q) = 1-q > 0$, $S_3(q) = 1-q-q^2 < 0$
(since $q > \alpha$), 
$S_4(q) = 1-q-q^2+q^3 = (1-q)^2(1+q) > 0$,
and $S_5 (q) = 1-q-q^2+q^3-q^4 > 0$, since $q \in (\alpha, \beta)$.
Furthermore, $(t_1, \ldots, t_5) = (-1,-1,1,-1,1)$.
The sign test then shows $L_q = 5$.

\item[(c)] The same calculations in (b) work here for
$S_1(q), \ldots, S_4 (q)$.  Now $S_5 (q) < 0$, so
this matches $t_5 = +1$.  However,
$S_6(q) = S_2(q) S_3(q^2)$, which has the same sign
as $1-q^2-q^4 > 0$.  Since $t_6 = +1$, the sign test fails.
So $L_q = 6$.

\end{itemize}
\end{proof}

We are now ready for the main result, which consists of the previous
Proposition and the following theorem.

\begin{theorem}
\leavevmode
\begin{itemize}
\item[(i)]
If $\alpha^{2^{-n}} < q < \beta^{2^{-n}}$, then
$L_q = 5\cdot 2^n$.

\item[(ii)]If $\beta^{2^{-n}} < q < \alpha^{2^{-n-1}}$, then
$L_q = 3 \cdot 2^{n+1}$.
\end{itemize}
\label{main}
\end{theorem}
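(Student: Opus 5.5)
The plan is to reduce everything to Proposition~\ref{prop5} via Lemma~\ref{switch2}. For each $N \ge 1$ write $N = m\cdot 2^r$ with $m$ odd. Since $a_q(0)=t_0=1$ always, $L_q$ is the least $N\ge 1$ at which the sign test for $(q,N)$ fails. By induction on $N$, if the test passes for all $N'<N$, then the hypothesis of Lemma~\ref{switch2} holds at $N$. So the test at $(q,N)$ has the same outcome as the test at $(q^{2^r},m)$. Thus $L_q$ is the least $N$ for which the sign test for $(q^{2^r},m)$ fails. That test depends only on the sign of $t_m S_m(q^{2^r})$ and does not care whether earlier shots at $q^{2^r}$ agreed with $\mathbf t$. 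So the argument reduces to knowing, for each odd $m$ and each $Q=q^{2^r}\in(0,1)$, the sign of $t_mS_m(Q)$.

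Set $Q_r = q^{2^r}$. In case (i), $Q_n\in(\alpha,\beta)$, $Q_{n+1}\in(\alpha^2,\beta^2)\subset(0,\alpha)$, and $Q_r\in(\alpha^{2^{r-n}},\beta^{2^{r-n}})$ for $r<n$; all of these lie above $\beta$, since $\alpha^{1/2}>\beta$. The odd $m$ that matter are small. For $N<5\cdot 2^n$ we have $m\in\{1,3\}$ with any $r$, or $m=5$ with $r<n$, or $m=7$ with $r\le n-2$, and so on. The relevant facts are read off from the computations in Proposition~\ref{prop5}. The test at $m=1$ always passes, since $S_1=1>0$ and $t_1=-1$. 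At $m=3$ it passes iff $Q>\alpha$. At $m=5$ it passes iff $Q>\beta$, given that $t_5=1$ and $S_5(Q)<0$ iff $Q>\beta$.

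Applying these in case (i): the first failures could come from $m=3$ with $Q_r<\alpha$, first at $r=n+1$, giving $N=3\cdot2^{n+1}$, or from $m=5$ with $Q_r<\beta$, first at $r=n$, giving $N=5\cdot2^n$. Since $5\cdot 2^n<6\cdot 2^n$, we get $L_q=5\cdot 2^n$, provided no larger odd $m$ with $m\cdot2^r<5\cdot2^n$ fails first. In case (ii), $Q_n\in(\beta,\sqrt\alpha)$, so $m=5$ passes at every $r\le n$, while $m=3$ first fails at $r=n+1$, where $Q_{n+1}<\alpha$. This gives $L_q=3\cdot 2^{n+1}$, again provided the other odd $m$ cause no earlier failure.

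The main obstacle is exactly these odd $m\ge 7$ with $m\cdot 2^r<L$. Such a pair forces $2^r< L/7$, so $Q=q^{2^r}$ is large: at least about $\beta^{2^{-2}}$ or so, up to an exponent adjustment. I therefore need a lemma that for odd $m\ge 7$ (and in fact all odd $m\ge3$) and $Q$ close enough to $1$, say $Q>\sqrt{\alpha}$, the sign test $t_mS_m(Q)<0$ passes. To prove it, I would write $m=2k+1$ and use Lemma~\ref{switch} on $S_{2k}(Q)=(1-Q)S_k(Q^2)$, so that $S_m(Q)=(1-Q)S_k(Q^2)+t_mQ^{2k}$. The lemma then follows by strong induction on $m$, comparing $|S_k(Q^2)|(1-Q)$ against $Q^{2k}$. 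If that bound proves delicate, I would fall back on the $\pm$-block structure of $\mathbf t$ (pairing terms $t_{2j}Q^{2j}+t_{2j+1}Q^{2j+1}=t_jQ^{2j}(1-Q)$) to bound the partial sums explicitly. The remaining cases are the finitely many small $m$ that need a direct check.
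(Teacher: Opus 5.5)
Your reduction is sound. Via Lemma~\ref{switch2}, everything comes down to the purely algebraic sign of $t_mS_m(Q)$ for odd $m$ and $Q=q^{2^r}$. Your analysis of $m\in\{1,3,5\}$ correctly produces the failure points $5\cdot 2^n$ and $3\cdot 2^{n+1}$.

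The gap is the lemma for odd $m\ge 7$, which carries all the remaining content, and as you state it, it is false. No threshold on $Q$ that is uniform in $m$ can work. For fixed $Q\in(0,1)$ one has $S_m(Q)\to\prod_{j\ge 0}(1-Q^{2^j})>0$, so every sufficiently large odd $m$ with $t_m=+1$ fails the test. Concretely, take $Q=0.8>\sqrt\alpha$ and $m=17$: then $S_{17}(0.8)=S_{16}(0.8)-0.8^{16}\approx 0.0354-0.0281>0$, while $t_{17}=+1$. Since $r=0$ is allowed, you must handle $Q=q$ itself together with every odd $m<L_q$, and $L_q$ grows like $1/(1-q)$. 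So the lemma has to relate $m$ and $Q$ quantitatively, which your sketch does not do. There is also a sign slip in your identity: $S_{2k+1}(Q)=(1-Q)S_k(Q^2)+t_{2k}Q^{2k}=(1-Q)S_k(Q^2)-t_mQ^{2k}$. Hence $t_mS_m(Q)=t_m(1-Q)S_k(Q^2)-Q^{2k}$.

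A correct version is short.
\begin{itemize}
\item The halving argument gives $|S_j(X)|\le 1$ for all $j\ge 0$ and $X\in[0,1]$. Indeed $|S_{2i}(X)|=(1-X)|S_i(X^2)|$, and $|S_{2i+1}(X)|\le (1-X)+X^{2i}\le 1-X+X^2\le 1$ for $i\ge 1$.
\item Hence $t_mS_m(Q)\le (1-Q)-Q^{m-1}$ for odd $m\ge 3$, so the test passes whenever $Q^{m-1}>1-Q$.
\item For $r=0$ and odd $m<L$, this gives $q^{m-1}>q^{L}>\alpha^5\approx 0.090$ in case (i) and $q^{m-1}>\beta^6\approx 0.083$ in case (ii).
\item Meanwhile $1-q<1-\alpha^{1/8}\approx 0.058$ in case (i) and $1-q<1-\beta^{1/8}\approx 0.050$ in case (ii), as soon as $n\ge 3$.
\item For $n\le 2$ (at $r=0$), only finitely many explicit polynomial inequalities on intervals remain to check.
\item The case $r\ge 1$ follows either from the same estimate at level $n-r$ or from the induction on $n$.
\end{itemize}

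This is exactly the step the paper's own proof does not supply. Its Case~1 handles $N=m\cdot 2^r$ with $r<n$ by applying the induction hypothesis to $q^{2^r}$. That is legitimate for $r\ge 1$, but for $r=0$ (odd $N$) it invokes the statement being proved at the same $n$. So you identified the right obstacle; you need the $m$-dependent estimate above to close it.
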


\begin{proof}
We prove the claims by
induction on $n$.  The base case, $n = 0$, is
the content of Proposition~\ref{prop5}, parts (b) and (c).

\begin{itemize}
\item[(i)]
Let $M = 5 \cdot 2^n$, and
suppose $\alpha^{2^{-n}} < q < \beta^{2^{-n}}$.  We prove that
$a_q(N) = t_N$ for $0 \leq N < M$, but $a_q(M) \not= t_M$.

Let $N< M$  and write $N = m \cdot 2^r$ where $m$ is odd.
There are three cases to consider.

\medskip

Case 1:  $r< n$.  Then $q^{2^r} \in ( \alpha^{2^{r-n}}, \beta^{2^{r-n}} )$
and $m < 5 \cdot 2^{n-r}$.  By induction, the sign test for $(q^{2^r}, m)$
gives the correct result.  By Lemma~\ref{switch2}, the sign test
for $(q, N)$ is correct.

\medskip

Case 2:  $r = n$.  Then $q^{2^n} \in (\alpha, \beta)$ and $m < 5$.
Since $m$ is odd, we must have $m \in \{1,3 \}$, and 
Proposition~\ref{prop5} and Lemma~\ref{switch2} together
say the sign test for $(q^{2^n}, m)$ is correct.

\medskip

Case 3:  $r > n$.  Then $m < 5 / 2^{r-n}$, which forces $m = 1$.
The sign test at $m = 1$ is always correct, and hence it is correct for
$N$.

\medskip

Thus $a_q(N) = t_N$ for $N < M$.  
It remains to see this fails for $N = M$.
By Lemma~\ref{switch}, we have  $S_M (q) = S_{2^n} (q) S_5(q^{2^n}) $.
The first factor is positive.  The second factor is also positive,
since $q^{2^n} \in (\alpha, \beta)$.  Since $t_M = t_{5 \cdot 2^n} = t_5 = +1$,
it follows that $a_q(M) \not= t_M$.  Hence $L(q) = M$, as desired.

\item[(ii)]

The second assertion follows similarly.  Let $M = 3\cdot 2^{n+1}$,
and suppose $\beta^{2^{-n}} < q < \alpha^{2^{-n-1}}$.  
We prove that
$a_q(N) = t_N$ for $0 \leq N < M$, but $a_q(M) \not= t_M$.

Let $N < M$ and write $N = m \cdot 2^r$ with $m$ odd.  

\medskip

Case 1:  If $r<n$ then  $q^{2^r} \in (\beta^{2^{r-n}}, \alpha^{2^{r-n-1}} )$
and $m < 3 \cdot 2^{n-r+1}$.  By induction, the sign test for $(q^{2^r}, m)$
gives the correct result.  By Lemma~\ref{switch2}, the sign test
for $(q, N)$ is correct.  

\medskip

Case 2:  If $r = n$, then $q^{2^n} \in (\beta, \sqrt{\alpha})$ and
$m < 6$.  Since $m$ is odd, we must have $m \in \{1,3,5 \}$ and
the result follows from Proposition~\ref{prop5} and Lemma~\ref{switch2}.

\medskip

Case 3:  If $r>n$, then $m < 6/2^{r-n}$,  and hence $m = 1$, so
the sign test is correct for $m$ and hence $N$.

\medskip

Thus $a_q(N) = t_N$ for $N < M$.  We now consider $N = M$.
Then $S_M (q) = S_{2^{n+1}} (q) S_3 (q^{2^{n+1}}) $ by
Lemma~\ref{switch}.  The first factor is positive.
Also $q^{2^{n+1}} < \alpha$, so
$S_3 (q^{2^{n+1}} = 1- q^{2^{n+1}} - q^{2^{n+2}} > 0$.
Since $t_M = t_3 = +1$, the sign test fails at $N = M$.
Hence $L_q = 3 \cdot 2^{n+1}$ as desired.
\end{itemize}
\end{proof}

As a simple corollary, we get the behavior of $L_q$ when $p$ is
a power of $1/2$.
\begin{corollary}
For $p = 2^{-k}$ and $k \geq 1$,
the sequence $(a_q(i))_{i \geq 0}$ agrees with
$(t_i)_{i \geq 0}$ for the first $3 \cdot 2^{k-1}$ terms, and then
disagrees.
\label{cor}
\end{corollary}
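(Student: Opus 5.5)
We will deduce the corollary from Theorem~\ref{main}(ii), applied with $n = k-1$, together with the case $k=1$. For $k=1$ we have $p=1/2$, so $q = 1/2 < \alpha$, and Proposition~\ref{prop5}(a) gives $L_q = 3 = 3\cdot 2^{0}$. So it suffices to treat $k \geq 2$, i.e.\ $n = k-1 \geq 1$ (the case $n=0$, i.e.\ $k=1$, is already covered). With $q = 1 - 2^{-k}$, the theorem gives $L_q = 3\cdot 2^{n+1} = 3\cdot 2^{k}$. That does not match the claimed value $3\cdot 2^{k-1}$, so the index must be shifted.

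The correct choice is $n = k-2$, which targets $L_q = 3\cdot 2^{k-1}$. This requires
\[
\beta^{2^{-(k-2)}} < 1-2^{-k} < \alpha^{2^{-(k-1)}} .
\]
Raising to the power $2^{k-1}$, this is equivalent to
\[
\beta^{2} < (1-2^{-k})^{2^{k-1}} < \alpha .
\]
Writing $f(k) = (1-2^{-k})^{2^{k-1}}$, we have $f(k) = g(2^k)$ with $g(x) = (1-1/x)^{x/2}$. The function $g$ is increasing in $x$, and $g(x) \to e^{-1/2} \doteq 0.6065$ as $x \to \infty$. Moreover $\beta^2 \doteq 0.4369$ and $\alpha \doteq 0.618$, so $f(k) \in [f(2), e^{-1/2}) \subset (\beta^2, \alpha)$ for every $k \geq 2$. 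Concretely, $f(2) = (3/4)^2 = 0.5625 > \beta^2$, and $e^{-1/2} < \alpha$. Hence Theorem~\ref{main}(ii) with $n = k-2$ applies for all $k \geq 2$ and gives $L_q = 3\cdot 2^{k-1}$.

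Together with the $k=1$ case, this yields agreement with $\bf t$ for exactly the first $3\cdot 2^{k-1}$ terms, followed by a disagreement.

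The main obstacle is the analytic inequality: we need the monotonicity of $g$ and the comparisons $e^{-1/2} < \alpha$ and $f(2) > \beta^2$. Monotonicity follows because
\[
\frac{d}{dx}\Bigl[\tfrac{x}{2}\log(1-1/x)\Bigr] = \tfrac12\Bigl[\log(1-1/x) + \frac{1}{x-1}\Bigr] > 0,
\]
which uses $\log(1-u) > -u/(1-u)$ for $0<u<1$. The two numerical comparisons are finite checks. Since $\beta$ is defined only as a root of $1 - X - X^2 + X^3 - X^4$, we would certify $\beta < 0.75$ by a sign evaluation of that quartic to get $\beta^2 < 0.5625$. Similarly, $e^{-1/2} < \alpha$ is equivalent to $\alpha^2 e > 1$, which can be verified with rational bounds.
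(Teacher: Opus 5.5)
Your proof is correct, and its skeleton is the paper's: apply Theorem~\ref{main}(ii) with $n=k-2$ and verify $\beta^{2^{2-k}} < 1-2^{-k} < \alpha^{2^{1-k}}$. The only real difference is how you check that inequality.

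The paper sandwiches $\log(1-h)$ between two linear functions, $-1.65h<\log(1-h)<-h$ on $(0,0.6676)$, and then uses $4\log\beta<-1.65$ and $2\log\alpha>-1$. This handles all $k$ in one stroke, but the lower bound has almost no slack ($4\log\beta\doteq-1.656$).

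You instead normalize to $\beta^2<(1-2^{-k})^{2^{k-1}}<\alpha$ and use monotonicity of $(1-1/x)^{x/2}$. This moves the lower bound to the worst case $k=2$ and reduces it to the roomy check $\beta<3/4$. Your upper bound $g(x)<e^{-1/2}$ is literally the paper's $\log(1-h)<-h$ with $h=1/x$. So there the monotonicity-plus-limit argument can be replaced by that one-line inequality.

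Your separate handling of $k=1$ via Proposition~\ref{prop5}(a) is more careful than the paper. Theorem~\ref{main} is proved only for $n\ge 0$, and $k=1$ would need $n=-1$.

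Two small repairs are needed. First, delete the false start in your first paragraph. It asserts that Theorem~\ref{main}(ii) with $n=k-1$ yields $L_q=3\cdot 2^k$, but $q=1-2^{-k}$ does not satisfy that hypothesis, since $(1-2^{-k})^{2^{k-1}}<e^{-1/2}<\beta$. So that sentence is false, not merely mismatched.

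Second, when certifying $\beta<3/4$ by a sign evaluation of $P(X)=1-X-X^2+X^3-X^4$, record that $P$ is strictly decreasing for $X>0$. Its derivative is $-1-X(4X^2-3X+2)<0$. Hence $\beta$ is the unique positive zero of $P$, and $P(3/4)<0$ indeed gives $\beta<3/4$.
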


\begin{proof}
By Theorem~\ref{main}, it 
suffices to show that  $\beta^{2^{2-k}} < 1-2^{-k} < \alpha^{2^{1-k}}$.
It is easy to check with calculus that
$-1.65h < \log(1-h)$ for $h \in (0, .6676)$, and
$\log(1-h) < -h$ for $h \in (0,1)$ by Taylor's theorem. 

Now $\beta \doteq 0.661 < .6676$, and
$4 \log \beta \doteq -1.656 < -1.65$, so setting $h = 2^{-k}$ we get
$(4 \log \beta) 2^{-k}  < -1.65 2^{-k} < \log(1-2^{-k})$, which
gives $\beta^{2^{2-k}} < 1-2^{-k}$.

Similarly,  $2 \log\alpha \doteq -0.9624 > -1$, so
$\log(1-2^{-k}) < -2^{-k} <  2^{1-k} \log \alpha$, which
gives $1-2^{-k} < \alpha^{2^{1-k}}$.
\end{proof}

\section{How this paper came to be}

The author JS gave a talk on the Thue-Morse sequence
for the Prison Mathematics Project
in March 2022 \cite{Shallit:2022}.  As part of preparing this talk, JS
was led to consider the results of Cooper and Dutle.  Their Conjecture
1.2 seemed tractable, and numerical experiments suggested 
the values in
Eq.~\eqref{values}.  Despite an offer of US \$25 to anyone who could
prove the conjecture, no one did. 

On April 28 2026 JS asked ChatGPT if it
could prove the results \cite{ChatGPT}.  First, ChatGPT succeeded on the weaker result
of Corollary~\ref{cor}.  (The proof presented here is different.)
Then, after some suggestions by JS, ChatGPT was
able to prove Theorem~\ref{main}.  
ChatGPT's proof was then carefully verified and
completely rewritten by JS.  Unfortunately, due to physical 
constraints, it does not seem possible for
ChatGPT to claim the reward of \$25.

\end{document}